\documentclass[conference]{IEEEtran}
\usepackage{theorem}
\usepackage{amsfonts}
\usepackage{amsmath}
\usepackage{amssymb}
\usepackage{graphicx}
\usepackage{epsfig}
\newtheorem{claim}{Claim}
\newcommand{\one}{{\bf 1}}
\renewcommand{\vec}{\overrightarrow}

\begin{document}

\author{Aubin Jarry, Pierre Leone and Jos\'e Rolim}
\title{VRAC: Theory \#1}


\maketitle

\begin{abstract}
In order to make full use of geographic routing techniques developed for sensor networks, nodes must be localized. However, traditional localization and virtual localization techniques are dependent either on expensive and sometimes unavailable hardware (e.g. GPS) or on sophisticated localization calculus (e.g. triangulation) which are both error-prone and with a costly overhead.

Instead of actually localizing nodes in the physical two-dimensional Euclidean space, we use directly the raw distance to a set of anchors to produce multi-dimensional coordinates. We prove that the image of the physical two-dimensional Euclidean space is a two-dimensional surface, and we show that it is possible to adapt geographic routing strategies on this surface, simply, efficiently and successfully.

\end{abstract}


\section{Introduction}
If in wired networks each node is equipped with substantial computation and storage resources, this is not the case for sensor networks which are made of small and cheap devices and therefore can not maintain routing tables. Instead of using routing tables, local routing techniques have been developed. A compelling technique -- geographic routing -- consists in using nodes' coordinates. Many algorithms have been devised, such as GPSR \cite{KK00} and  OAFR \cite{KWZ08} which use a combination of greedy routing and perimeter routing. One can also cite GRIC \cite{PS07} a greedy routing algorithm following the sides of an obstacle when one is met, and which introduces some inertia in the direction followed by the message. Early obstacle detection algorithms, that use of a bit of memory at each node, have been proposed in \cite{HJM+09,ML+08}.

Unfortunately, obtaining physical coordinates is problematic in itself. The hypothesis of having a GPS for each sensor arguably leads to too expensive and heavy devices. This assumption may be weakened by equipping some nodes called anchors with the GPS or with other localization hardware. Approximate coordinates are then computed for all nodes of the network in a localization phase. In \cite{LR03}, three such algorithms are compared, namely Ad-hoc positioning, Robust positioning, and N-hop multilateration. One can also cite the algorithm At-Dist \cite{BKS08}, which is a distributed algorithm estimating the position of each node together with an estimate of its accuracy. Some authors improved these results by using angle measurements \cite{BGJ05}.
All of these localization techniques invariably require a flooding from the anchors and many computations at each node. They are therefore energy consuming, error-prone and compute only approximate coordinates.

Another approach followed in \cite{CC+05,MLRT04,MobileAssist_INFOCOM2005,RPSS03} consists in computing virtual coordinates and has the advantage of not needing anchors. In \cite{MobileAssist_INFOCOM2005}, the authors use a mobile unit to assist in measuring the distance between nodes in order to improve accuracy. The algorithm proposed in \cite{CC+05} first chooses three nodes that will behave as anchors and from which virtual coordinates will then be determined. If these techniques do not need any external hardware, they also suffer from their inaccuracy or high energy consumption in a preprocessing phase.

In this paper, we discard any preprocessing technique and propose to directly use raw distance information. We study routing algorithms using directly the distance to the anchors as coordinates, as first proposed in ~\cite{Vrac1}, without computing from them 2-dimensional coordinates. In section~\ref{sec_implementation} we precisely describe how the idea is implemented, and in section~\ref{sec_algebra} we analyze how a message sent towards a destination performs in the new coordinate system.

\section{Implementation}
\label{sec_implementation}

Current localization methods rely on raw information computed externally from normal sensing nodes (exact location of some anchors), and on raw information computed locally in normal sensing nodes (distance to anchors, angle measurements). In this paper, we do use the information about the distance to some anchors, but we completely discard any physical information that the anchors might have. This gives much more flexibility in the way sensor networks are deployed:
anchors might be some external entities, as a plane or a robot; anchors might be specialized nodes whose only purpose is to emit a strong signal, or they might be randomly chosen sensors which advertise their distance to the other nodes.

We build a multi-dimensional coordinate system using directly the raw information, i.e. the distance to the anchors. 
Given a node at location $X$, we define the multi-dimensional coordinates $f(X)$ of this node as its distance to the anchors at location $A_1, A_2, \dots A_n$:

$$f:X\rightarrow \begin{pmatrix} d(X,A_1) \\ d(X,A_2) \\ \dots \\ d(X,A_n) \end{pmatrix}.$$

We call this function the anchor coordinates function, and we call these multi-dimensional coordinates the anchor coordinates. Whereas any distance function, such as hop count, may be used~\cite{Vrac1}, in section~\ref{sec_algebra} we pay a special attention to the properties of $f$ when $d$ is the Euclidean distance.

In the next subsection we discuss the computation costs that are specific to using multi-dimensional coordinates. We then go into the details of greedy routing implementation, and into the details of rotating multi-dimensional vectors.

\subsection{Computation Cost}
While saving on initialization overhead, multi-dimensional routing causes some additional computation costs when sending messages compared to traditional two-dimensional routing. Here is a break-down of various vector operations:

\begin{center}
\begin{tabular}{|c|c|c|}
\hline
\bf Operation & \bf $n$-dimensional & \bf 2-dimensional\\
\hline
$\vec{u}+\vec{v}$ & $n$ additions & 2 additions\\
\hline
$k\vec{u}$ & $n$ multiplications & 2 multiplications\\
\hline
$\vec{u}\cdot\vec{v}$ & $n$ multiplications & 2 multiplications\\
& $n-1$ additions & 1 addition\\
\hline
& 1 sqrt extraction & 1 sqrt extraction \\
$\frac{\vec{u}}{||\vec{u}||}$ & 1 inversion & 1 inversion \\
& $2n$ multiplications & 2 multiplications \\
& $n-1$ additions & 1 addition\\
\hline
\end{tabular}
\end{center}

Note that additions and multiplications typically use 1 CPU cycle, whereas the expensive operations (square root extraction, inversion) stay the same in multi-dimensional routing as in traditional two-dimensional routing.
We also point out that theses computation costs are not communication costs and are lower in terms of energy consumption by some order of magnitude. 
 
\subsection{Greedy Routing}

Greedy routing is the most basic geographic routing algorithm. It consists in following the direction to the destination. This basic strategy is widely used as a default mode in most geographic routing protocols.
When a node at location $X$ which wants to send a message towards a final destination at location $D$, three implementations of greedy routing are routinely used:
\begin{enumerate}
\item {\em (canonical)} for each neighbor location $X'$, compute the distance $d(X',D)$ and send the message to the neighbor which is closest to $D$. Alternatively, compute $\vec{X'D}\cdot\vec{X'D}$ instead of $d(X',D)$.
\item for each neighbor location $X'$, compute the scalar product $\vec{XX'}\cdot\vec{XD}$ and select the neighbor with the best result.
\item for each neighbor location $X'$, compute the scalar product $\frac{\vec{XX'}}{||\vec{XX'}||}\cdot\vec{XD}$ and select the neighbor with the best result.
\end{enumerate}

These three implementations are valid for any number of coordinates.

\subsection{Rotation}

When greedy strategies fail, a number of two-dimensional routing algorithms fall back on more sophisticated routing modes that use rotations or angle computations~\cite{KK00, PS07}. When using two dimensions, a rotation is typically defined by $rot_\alpha : (x,y) \rightarrow (x\cos\alpha+y\sin\alpha, y\cos\alpha-x\sin\alpha)$. We can't define such a rotation in $n$ dimensions ($n\geq 3$). However, if we assume that our sensors were on a two-dimensional physical plane in the first place, then they are distributed over a two-dimensional surface in the multi-dimensional space (more on this in section~\ref{sec_algebra}). We do the following:
\begin{enumerate}
\item compute an orthonormal basis $(\vec{i},\vec{j})$ of the tangent plane in $f(X)$ (see section~\ref{sec_algebra}).
\item express vectors $\vec{u}$ as $x_u\vec{i} + y_u\vec{j} + \vec{\epsilon_u}$ by computing $x_u = \vec{u}\cdot\vec{i}$ and $y_u = \vec{u}\cdot\vec{j}$. We assume that $\vec{u}$ is close to the tangent plane in $f(X)$, which means that we ignore in fact $\vec{\epsilon_u}$.
\end{enumerate}
Rotations are then normally carried out on the tangent plane. The sensitive part is to compute $(\vec{i},\vec{j})$ and to make sure that the orientation of the surface is preserved when routing the message (taking the surface upside-down has the undesirable effect of negating angles). Given a node at location $X$, a destination at $D$, and a basis $(\vec{i_{old}},\vec{j_{old}})$ inherited from a previous node, we do the following:
\begin{enumerate}
\item choose two neighbors at position $X_1$ and $X_2$
  \begin{itemize}
  \item either arbitrarily (low quality, inexpensive)
  \item or such that $\frac{|\vec{XX_1}\cdot\vec{XX_2}|}{||\vec{XX_1}|| ||\vec{XX_2}||}$ is minimal (i.e. choose $\vec{XX_1}$ and $\vec{XX_2}$ as orthogonal as possible)
  \end{itemize}
\item compute $\vec{i} = \frac{\vec{XX_1}}{||\vec{XX_1}||}$
\item compute $\vec{u} = \vec{XX_2} - (\vec{i}\cdot\vec{XX_2})\vec{i}$
\item compute $\vec{v} = \frac{\vec{u}}{||\vec{u}||}$
\item compute $\sigma = (\vec{i}\cdot\vec{i_{old}})(\vec{v}\cdot\vec{j_{old}}) - (\vec{i}\cdot\vec{j_{old}})(\vec{v}\cdot\vec{i_{old}})$.
\item if $\sigma\geq 0$ then set $\vec{j} = \vec{v}$, else set $\vec{j} = -\vec{v}$.
\end{enumerate}

Note that many algorithms using angles use normalized vectors. Therefore, most of the normalization cost when computing the basis $(\vec{i},\vec{j})$ is not an additional cost of multi-dimensional routing.

\subsection{Experimentation}

The routing protocols GRIC~\cite{PS07} and ROAM~\cite{HJM+09} were experimented in \cite{Vrac1}, using 6 anchors randomly distributed in the network. Messages were delivered with 100\% success, and the average length of the paths followed by messages using anchor coordinates was the same as when using physical Euclidean coordinates.

\section{Algebraic Analysis} 
\label{sec_algebra}

In the plane with Euclidean distance, any node has a pair of physical coordinates $X=(x,y)$. We denote by $A_i = (x_i,y_i)$ the physical coordinates of the $i^{th}$ anchor. The anchor coordinates function is a function from $\mathbb{R}^2\rightarrow\mathbb{R}^n$ defined by
$$f:(x,y)\rightarrow \begin{pmatrix} \sqrt{(x-x_1)^2 + (y-y_1)^2} \\ \sqrt{(x-x_2)^2 + (y-y_2)^2} \\ \dots \\ \sqrt{(x-x_n)^2 + (y-y_n)^2} \end{pmatrix}.$$
Since the functions $f_i:(x,y)\rightarrow\sqrt{(x-x_i)^2 + (y-y_i)^2}$ are continuous and $C^\infty$ except in $(x_i,y_i)$, we show that with three or more anchors that are not on the same line, the image $f(\mathbb{R}^2)$ in $\mathbb{R}^n$ is a {\em continuous surface} (claim~\ref{claim_tangent}).
Figure~\ref{fig:3a} represents the image of $f$, when there are three anchors at location $(0,0)$, $(0,1)$ and $(1,0)$.

\begin{figure}[tb]
 \begin{center}
  \includegraphics[width=8cm]{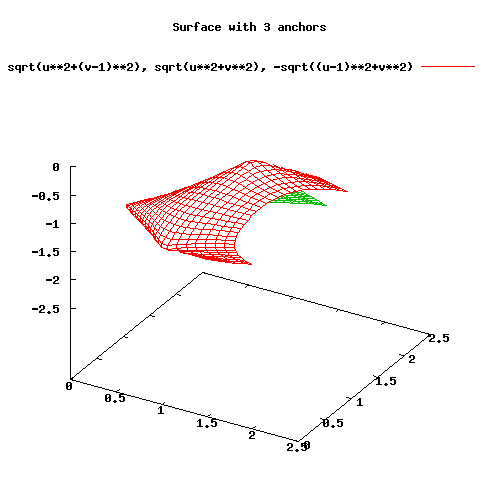}
  \caption{Representation of the distance to three anchors.}
  \label{fig:3a}
 \end{center}
\end{figure}

First, we describe in subsection \ref{sec_tangent} the vector spaces that are tangent to $f(\mathbb{R}^2)$.
Next, we express in subsection \ref{sec_vector} what is the physical direction of messages that use the greedy strategy with virtual coordinates. This physical direction produces a curve that approximates the paths followed by messages.
We then tell in subsection \ref{sec_virtual} what are the convergence conditions on $f(\mathbb{R}^2)$ under which the curve ends at the destination, and prove a bound on the length of this curve.
Finally, we study in subsection \ref{sec_physical} how the placement of anchors affect the convergence conditions and how we can guarantee that they are met.

\subsection{Tangent Space}
\label{sec_tangent}
At any point $f(x,y)$, the surface $f(\mathbb{R}^2)$ has a tangent vector space spanned by the two vectors $\frac{\partial f}{\partial x}(x,y)$ and $\frac{\partial f}{\partial y}(x,y)$. We have

$$\frac{\partial f}{\partial x}(x,y) =
\begin{pmatrix}
  \frac{x-x_1}{\sqrt{(x-x_1)^2+(y-y_1)^2}} \\
  \frac{x-x_2}{\sqrt{(x-x_2)^2+(y-y_2)^2}} \\
  \dots \\
  \frac{x-x_n}{\sqrt{(x-x_n)^2+(y-y_n)^2}}
\end{pmatrix}$$
and
$$\frac{\partial f}{\partial y}(x,y) =
\begin{pmatrix}
  \frac{y-y_1}{\sqrt{(x-x_1)^2+(y-y_1)^2}} \\
  \frac{y-y_2}{\sqrt{(x-x_2)^2+(y-y_2)^2}} \\
  \dots \\
  \frac{y-y_n}{\sqrt{(x-x_n)^2+(y-y_n)^2}}
\end{pmatrix}.$$

\begin{claim}
\label{claim_tangent}
The vector space that is tangent to the surface $f(\mathbb{R}^2)$ in $f(X)$ is two-dimensional if and only if the node $X$ and the anchors $A_1, A_2,\dots, A_n$ are not situated on a single line in the physical space.
\end{claim}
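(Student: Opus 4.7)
The plan is to show that the two partial derivatives $\frac{\partial f}{\partial x}(X)$ and $\frac{\partial f}{\partial y}(X)$ — which span the tangent space by definition — are linearly independent exactly when $X, A_1, \dots, A_n$ are not collinear. The key observation is that the $i^{th}$ component of $\frac{\partial f}{\partial x}$ equals $(x-x_i)/d_i$ and that of $\frac{\partial f}{\partial y}$ equals $(y-y_i)/d_i$, with $d_i = \|\vec{A_iX}\|$. Up to the positive factor $1/d_i$, these are just the $x$ and $y$ coordinates of the physical vector $\vec{A_iX}$, so the algebraic condition for linear dependence will translate directly into a geometric collinearity condition.

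First I would note that, since $X \neq A_i$ for any anchor (otherwise the $i^{th}$ coordinate of $f$ is not differentiable), the quantities $d_i$ are all strictly positive and the partial derivatives are well-defined. Then I would suppose there exist scalars $(a,b) \neq (0,0)$ with $a\frac{\partial f}{\partial x}(X) + b\frac{\partial f}{\partial y}(X) = 0$, and read off the $i^{th}$ coordinate: $a(x-x_i) + b(y-y_i) = 0$ after multiplying through by $d_i$. This means the nonzero vector $(a,b)$ is orthogonal to $\vec{A_iX}$ for every $i$, so every $\vec{A_iX}$ lies on the single line through $X$ perpendicular to $(a,b)$. Hence $X, A_1,\dots,A_n$ are collinear.

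Conversely, if $X, A_1,\dots,A_n$ all lie on one physical line $\ell$, I would pick any nonzero vector $(a,b)$ normal to $\ell$. Then by construction $a(x-x_i) + b(y-y_i) = 0$ for every $i$, and dividing by $d_i > 0$ yields a nontrivial linear combination $a\frac{\partial f}{\partial x}(X) + b\frac{\partial f}{\partial y}(X) = 0$, so the tangent space has dimension at most one (and is actually exactly one-dimensional since at least one $\vec{A_iX}$ is nonzero, so one of the partials is nonzero).

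No step presents a real obstacle: the content is entirely the bookkeeping identification of the $i^{th}$ component of $\partial f$ with the physical vector $\vec{A_iX}/d_i$. The only subtlety worth flagging is that the claim implicitly assumes $X \notin \{A_1,\dots,A_n\}$, which is where $f$ is smooth in the first place; in the collinear direction, one should also remark that the tangent space, though one-dimensional, is still well-defined and nonzero as long as there is at least one anchor distinct from $X$, so the ``if and only if'' really is between two-dimensionality and non-collinearity.
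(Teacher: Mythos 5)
Your proof is correct and follows essentially the same route as the paper: both arguments reduce linear dependence of $\frac{\partial f}{\partial x}(X)$ and $\frac{\partial f}{\partial y}(X)$ to the existence of a single nonzero direction orthogonal to every $\vec{XA_i}$, i.e.\ to collinearity of $X$ and the anchors. The paper phrases this via a rotation of coordinates $(x,y)\mapsto(u,v)$ so that the vanishing combination becomes $\frac{\partial f}{\partial u}=0$, whereas you read the orthogonality off componentwise with a general $(a,b)$ -- a purely cosmetic difference; your explicit remark that $X\notin\{A_1,\dots,A_n\}$ is needed for differentiability is a small point the paper leaves implicit.
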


\begin{proof}
The tangent vector space is two-dimensional if and only if $\frac{\partial f}{\partial x}(x,y)$ and $\frac{\partial f}{\partial y}(x,y)$ are not collinear.
Conversely $\frac{\partial f}{\partial x}(x,y)$ and $\frac{\partial f}{\partial y}(x,y)$ are collinear if and only if there is $\alpha\in[0,2\pi[$ such that $\frac{\partial f}{\partial x}(x,y)\cos\alpha + \frac{\partial f}{\partial y}(x,y)\sin\alpha = 0$. By changing the physical coordinates into $u = x\cos\alpha + y\sin\alpha$ and $v = y\cos\alpha - x\sin\alpha$ (we also set $u_i = x_i\cos\alpha + y_i\sin\alpha$ and $v_i = y_i\cos\alpha - x_i\sin\alpha$), we express the tangent vector space with the two vectors

$$\frac{\partial f}{\partial u}(X) =
\begin{pmatrix}
  \frac{u-u_1}{\sqrt{(u-u_1)^2+(v-v_1)^2}} \\
  \frac{u-u_2}{\sqrt{(u-u_2)^2+(v-v_2)^2}} \\
  \dots \\
  \frac{u-u_n}{\sqrt{(u-u_n)^2+(v-v_n)^2}}
\end{pmatrix}$$
and
$$\frac{\partial f}{\partial v}(X) =
\begin{pmatrix}
  \frac{v-v_1}{\sqrt{(u-u_1)^2+(v-v_1)^2}} \\
  \frac{v-v_2}{\sqrt{(u-u_2)^2+(v-v_2)^2}} \\
  \dots \\
  \frac{v-v_n}{\sqrt{(u-u_n)^2+(v-v_n)^2}}
\end{pmatrix}.$$

We have $\frac{\partial f}{\partial u}(X) = 0$ if and only if for all $i\in\{1\dots n\}, u=u_i$.
\end{proof}

When  $\frac{\partial f}{\partial x}(x,y)$ and $\frac{\partial f}{\partial y}(x,y)$ are not collinear, then the Jacobian matrix
$$J_f(X) = J_f(x,y) =
\begin{pmatrix}
  \frac{x-x_1}{\sqrt{(x-x_1)^2+(y-y_1)^2}}
  & \frac{y-y_1}{\sqrt{(x-x_1)^2+(y-y_1)^2}}\\
  \frac{x-x_2}{\sqrt{(x-x_2)^2+(y-y_2)^2}}
  & \frac{y-y_2}{\sqrt{(x-x_2)^2+(y-y_2)^2}} \\
  \dots & \dots \\
  \frac{x-x_n}{\sqrt{(x-x_n)^2+(y-y_n)^2}}
  & \frac{y-y_n}{\sqrt{(x-x_n)^2+(y-y_n)^2}}
\end{pmatrix}.$$
defines a morphism of the physical plane into the vector space tangent to $f(\mathbb{R}^2)$ at $f(x,y)$. Given a node at position $X$ in the physical space and its neighbors at position $X_1, X_2,\dots, X_\delta$, it is not unreasonable to assume that for all $i$, $f(X_i)$ is close to the Taylor expansion $f(X) + J_f(X)(\vec{XX_i})$ in the affine space tangent to $f(\mathbb{R}^2)$ in $f(X)$.

\subsection{Directional Vector}
\label{sec_vector}
In a greedy routing strategy using virtual coordinates, the neighbor $X'$ of choice will be a maximum for some scalar product $\vec{f(X)f(X')}\cdot\vec{f(X)f(D)}$.

\begin{claim}
Given two physical positions $X,D\in\mathbb{R}^2$, the function $s_X:\mathbb{R}^2\rightarrow\mathbb{R}$ such that for any vector $\vec{XX'}\in\mathbb{R}^2$, $s_X(\vec{XX'})$ is the scalar product of $J_f(X)(\vec{XX'})$ by $\vec{f(X)f(D)}$ is a linear form that can be expressed as
$$\vec{XX'}\rightarrow\vec{XX'}\cdot\sum_i{\alpha_i \vec{XA_i}}$$
where $\alpha_i = \frac{d(X,A_i)-d(D,A_i)}{d(X,A_i)}$.
\end{claim}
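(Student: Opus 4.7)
The plan is to recognize that $s_X$ is obtained by pulling back the linear form $\vec{w} \mapsto \vec{w}\cdot\vec{f(X)f(D)}$ (defined on $\mathbb{R}^n$) through the linear map $J_f(X):\mathbb{R}^2\to\mathbb{R}^n$. In particular $s_X$ is automatically a linear form on $\mathbb{R}^2$, since it is the composition of two linear maps, so the only real work is to identify it explicitly, and the natural tool is the adjoint (transpose) of $J_f(X)$.

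Concretely, I would write
\[
s_X(\vec{XX'}) \;=\; \langle J_f(X)(\vec{XX'}),\,\vec{f(X)f(D)}\rangle \;=\; \langle \vec{XX'},\,J_f(X)^{T}\,\vec{f(X)f(D)}\rangle,
\]
and then compute $J_f(X)^{T}\,\vec{f(X)f(D)}\in\mathbb{R}^2$ explicitly. The $i$-th row of $J_f(X)$ is $\frac{1}{d(X,A_i)}(x-x_i,\,y-y_i) = \frac{1}{d(X,A_i)}\vec{A_iX}$, and the $i$-th coordinate of $\vec{f(X)f(D)}$ is $d(D,A_i)-d(X,A_i)$. Taking the transpose simply pairs each row with the corresponding coordinate, so
\[
J_f(X)^{T}\,\vec{f(X)f(D)} \;=\; \sum_{i=1}^{n} \frac{d(D,A_i)-d(X,A_i)}{d(X,A_i)}\,\vec{A_iX}.
\]

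Finally, using $\vec{A_iX}=-\vec{XA_i}$, the sign flips and the coefficient becomes exactly $\alpha_i = \frac{d(X,A_i)-d(D,A_i)}{d(X,A_i)}$, yielding
\[
s_X(\vec{XX'}) \;=\; \vec{XX'}\cdot\sum_i \alpha_i\,\vec{XA_i},
\]
as claimed. There is no real obstacle here: the statement is a bookkeeping identity about the transpose of $J_f(X)$, and the only thing to watch is the sign flip coming from $\vec{A_iX}=-\vec{XA_i}$, which is precisely what makes the numerator of $\alpha_i$ equal to $d(X,A_i)-d(D,A_i)$ rather than the opposite.
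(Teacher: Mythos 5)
Your proof is correct and follows essentially the same route as the paper: the paper decomposes $\vec{f(X)f(D)}$ over the standard basis vectors $\one_i$ and sums the contributions $s_{X,i}$, which is exactly the coordinate-by-coordinate computation of $J_f(X)^{T}\vec{f(X)f(D)}$ that you phrase via the adjoint. Both arguments hinge on identifying the $i$-th row of $J_f(X)$ with $\frac{1}{d(X,A_i)}\vec{A_iX}$ and on the same sign flip $\vec{A_iX}=-\vec{XA_i}$.
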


\begin{proof}
The transformation $\vec{XX'}\rightarrow J_f(X)(\vec{XX'})$ is a linear function. Since the scalar product by $\vec{f(X)f(D)}$ is a linear form, $s_X$ is also a linear form.
We may decompose the vector $\vec{f(X)f(D)}$ into $\sum_i{(d(D,A_i) - d(X,A_i)) \one_i}$ where $\one_i$ is the multi-dimensional vector with 1 as its $i^{th}$ coordinate and zeroes everywhere else.
In this manner, $s_X =\sum_i{s_{X,i}}$ where
$$s_{X,i}(\vec{XX'}) = (d(D,A_i) - d(X,A_i)) J_f(X)(\vec{XX'})\cdot \one_i$$
$$J_f(X)(\vec{XX'})\cdot \one_i = \frac{(x-x_i)(x'-x)+(y-y_i)(y'-y)}{\sqrt{(x-x_i)^2+(y-y_i)^2}}.$$
Thus $s_{X,i}$ can be expressed as
$$\vec{XX'}\rightarrow\vec{XX'}\cdot\frac{d(X,A_i)-d(D,A_i)}{d(X,A_i)} \vec{XA_i}.$$
\end{proof}

Given a node at physical location $X$ and a destination $D\in\mathbb{R}^2$, we call {\em apparent destination} related to $D$ in $X$ the location
$$D' = X + \sum_i\alpha_i \vec{XA_i} =
 X + \sum_i{\frac{d(X,A_i)-d(D,A_i)}{d(X,A_i)} \vec{XA_i}}.$$

\subsection{Virtual consistency}
\label{sec_virtual}
We say that the anchor coordinate system is virtually consistent at distance $r$ for a physical destination $D\in\mathbb{R}^2$, if at every point $X\neq D$ such that $f(X)$ is in a closed metric ball of center $f(D)$ and radius $r$, then $s_X \neq 0$. Note that $s_x = 0$ if and only if the multi-dimensional vector $\vec{f(X)f(D)}$ is orthogonal to the vector space tangent to $f(\mathbb{R}^2)$ in $f(X)$. It is also equivalent to state that the anchor coordinate system is virtually consistent at distance $r$ for a physical destination $D\in\mathbb{R}^2$, if no closed metric ball centered on $f(D)$ and of radius $0<r'\leq r$ is tangent to $f(\mathbb{R}^2)$.

\begin{claim}
If the anchor coordinate system is virtually consistent at distance $r$ for a physical destination $D\in\mathbb{R}^2$, then
there is $\lambda\in\mathbb{R}^+$ such that
for any point $X_0$ with $f(X_0)$ in a closed metric ball of center $f(D)$ and radius $r$ we have a curve $c[0,1]\in\mathbb{R}^2$ that verifies:
\begin{itemize}
\item $c:[0,1]\rightarrow\mathbb{R}^2$ is a derivable function,
\item $c(0) = X_0$ and $c(1) = D$,
\item At any point $t\in[0,1[$, the vector $\frac{\partial c}{\partial t}(t)$ is collinear with the vector $\vec{c(t)D_t'}$ where $D_t'$ is the apparent destination related to $D$ in $c(t)$.
\item $\int_0^1{||\frac{\partial c}{\partial(t)}||dt}\leq \lambda d(X_0,D)$.
\end{itemize}
\end{claim}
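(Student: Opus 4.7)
The approach is to recognize the field $V(X) := \sum_i \alpha_i \vec{XA_i}$ as, up to sign, the gradient of the scalar potential
$$g(X) \;:=\; \tfrac{1}{2}\,\|f(X)-f(D)\|^2,$$
since a direct computation using $J_f(X)$ yields $\nabla g(X) = -V(X)$. Virtual consistency at distance $r$ then reads: $D$ is the unique critical point of $g$ in the physical sublevel set $B_r := \{X : 2g(X) \leq r^2\}$. The curve $c$ will be built as a reparameterized negative-gradient flow line of $g$ starting at $X_0$.

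\textbf{Existence and convergence.} I consider the ODE $\dot X = V(X)$ with $X(0) = X_0$. Off $\{D, A_1, \dots, A_n\}$ the field $V$ is $C^\infty$, so Picard--Lindel\"of produces a maximal integral curve. Along it, $\frac{d}{ds}g(X(s)) = -\|V(X(s))\|^2 \leq 0$, so the trajectory stays in $B_r$, avoids every anchor $A_j \neq D$ (by a short continuity argument on $g$), and is defined for all $s \geq 0$. Because $g$ is bounded below and strictly decreasing off $D$, a standard $\omega$-limit argument, combined with virtual consistency, forces $X(s) \to D$. Reparameterizing by an increasing smooth $t = \varphi(s)$ with $\varphi(\infty) = 1$ yields a function $c:[0,1] \to \mathbb{R}^2$ that is derivable on $[0,1[$, with $c(0) = X_0$, $c(1) = D$, and $c'(t)$ collinear to $V(c(t))$, i.e.\ to $\vec{c(t)D'_t}$.

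\textbf{Length bound.} For the constant $\lambda$, I split $B_r$ into a small neighborhood $U$ of $D$ and its complement. Taylor-expanding at $D$ gives
$$V(X) = -M\,(X-D) + O(\|X-D\|^2),$$
where the symmetric linear map $M$ is characterized by $M h = \sum_i d(D,A_i)^{-2}\,(\vec{DA_i}\cdot h)\,\vec{DA_i}$; claim~\ref{claim_tangent} at $D$ (virtual consistency in particular prevents the anchors from being aligned with $D$) forces $M$ to be positive definite. A Gr\"onwall-type estimate on the linearized flow then produces a constant $C_1$ such that any trajectory entering $U$ at $Y$ reaches $D$ with arc length at most $C_1\|Y-D\|$. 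Outside $U$, continuity of $V$ and virtual consistency give $\|V\| \geq \mu > 0$, so $g$ decreases at rate $\geq \mu^2$, bounding the arc length accrued in $B_r \setminus U$ by a universal constant; since $d(X_0, D)$ is itself bounded below on $B_r \setminus U$, this constant is absorbed into a multiple of $d(X_0, D)$.

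\textbf{Main obstacle.} The real difficulty lies precisely in this uniformity argument. The near-$D$ piece requires upgrading the pointwise hypothesis $V \neq 0$ to the \emph{quantitative} positive-definiteness of $M$, or equivalently to a uniform lower bound on the angle between $\vec{XD}$ and $V(X)$ for $X$ close to $D$. The far piece implicitly needs $B_r$ to be bounded in the physical plane, a compactness that is not among the stated hypotheses and must be extracted from virtual consistency together with a spread condition on the anchors (cf.\ subsection~\ref{sec_physical}).
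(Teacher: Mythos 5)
Your construction is genuinely different from the paper's, and in one respect it is more faithful to the statement being proved. The paper does not integrate the field $V=-\nabla g$ in the plane; it defines $c$ through the image, by requiring $\frac{\partial (f\circ c)}{\partial t}$ to be the orthogonal projection of a rescaled $\vec{f(c(t))f(D)}$ onto the tangent plane, normalized so that $d(f(c(t)),f(D))$ decreases at the constant rate $d(f(X_0),f(D))$. The curve therefore reaches $D$ exactly at $t=1$ with no $\omega$-limit argument, and the length bound falls out immediately as $\sqrt{n}\,\max\Vert (J_f)^{-1}\Vert\, d(X_0,D)$, the transversality constant $k$ being produced by a single compactness argument on the ball. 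Note that the paper's pulled-back velocity is a multiple of $(J_f^{T}J_f)^{-1}V$, which is \emph{not} collinear with $V=\vec{XD'}$ in general, so your gradient flow is arguably the construction that literally satisfies the third bullet of the claim; the price you pay is a heavier length estimate (linearization plus Gr\"onwall near $D$, uniform lower bound on $\Vert V\Vert$ away from $D$) where the paper gets by with $k$ and $\Vert (J_f)^{-1}\Vert$.

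Of the two obstacles you flag, the compactness one is not real: $\vert d(X,A_1)-d(D,A_1)\vert\le\Vert f(X)-f(D)\Vert\le r$ already confines the set $\{X: \Vert f(X)-f(D)\Vert\le r\}$ to a disk around $A_1$, and this is exactly how the paper obtains its compact set. The near-$D$ obstacle is real: virtual consistency only constrains points $X\neq D$, so it does not by itself exclude all anchors lying on a single line through $D$, hence does not deliver positive definiteness of $M$; you would need to invoke the hypothesis of Claim~\ref{claim_tangent} at $X=D$ explicitly, or show that collinearity of the anchors with $D$ forces nearby critical points of $g$. You should be aware, however, that the paper's proof hides the very same assumption: its assertion that $k>0$ by compactness degenerates as $X\to D$, where both the projected norm and $d(X,D)$ vanish and the limiting ratio is controlled precisely by the non-degeneracy of $J_f(D)$, i.e.\ by the anchors not being aligned with $D$. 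So your proposal is a valid alternative route whose one genuine gap is shared with the paper's own argument.
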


\begin{proof}
Let $k$ be the largest positive number such that for any point $X=(x,y)$ with $f(X)$ in a closed metric ball of center $f(D)$ and radius $r$,
the orthogonal projection of $\vec{f(X)f(D)}$ on the vector space defined by the two vectors $\frac{\partial f}{\partial x}(x,y)$ and $\frac{\partial f}{\partial y}(x,y)$ has a norm greater than or equal to $k d(X,D)$.
Since $f$ is a continuous function, the set of physical positions $X$ such that $d(f(X),f(D))\leq r$ is compact subset of $\mathbb{R}^2$. Therefore, if $k$ was equal to zero, then there would be a point $X\neq D$ in the ball such that
$\vec{XD}$ is orthogonal to the surface $f(\mathbb{R}^2)$,
which we excluded in our assumptions.

Let $c:[0,1]\rightarrow\mathbb{R}^2$ be the function defined by $c(0) = X_0$ and such that
$\frac{\partial (f \circ c)}{\partial t}(t)$ is the orthogonal projection of $\frac{k^{-2}d(f(X_0),f(D))}{d((f\circ c)(t),f(D))}\vec{(f\circ c)(t)f(D)}$ on the vector space defined by the two vectors $\frac{\partial f}{\partial x}(c(t))$ and $\frac{\partial f}{\partial y}(c(t))$.
Since
$$\frac{\partial (f \circ c)}{\partial t}(t)\cdot\frac{\vec{(f\circ c)(t)f(D)}}{||\vec{(f\circ c)(t)f(D)}||} \geq k||\frac{\partial (f \circ c)}{\partial t}(t)||$$
we can see that
$$\frac{\partial d((f\circ c)t),f(D))}{\partial t}(t)\geq d(f(X_0),f(D))$$
which implies that $c(1)= D$.
The norm of $\frac{\partial c}{\partial t}(t)$ is smaller than or equal to $||(J_f(c(t)))^{-1}||d(f(X_0),f(D))$, which means that
$$\int_0^1{||\frac{\partial c}{\partial(t)}||dt}\leq \max_{t\in[0,1]}||(J_f(c(t)))^{-1}||d(f(X_0,f(D)))$$
$$\int_0^1{||\frac{\partial c}{\partial(t)}||dt}\leq \sqrt{n}\max_{t\in[0,1]}||(J_f(c(t)))^{-1}||d(X_0,D).$$
\end{proof}

\subsection{Physical consistency}
\label{sec_physical}
We say that the anchor coordinate system is physically consistent at position $X$ for the destination $D$ if $\vec{XD'}\cdot\vec{XD} > 0$, where $D'$ is the apparent destination related to $D'$ in $X$. Observe that if the anchor coordinate system is physically consistent for the destination $D$ in a ball $\cal B$ around $D$, then it is virtually consistent at distance $r$ for the physical destination $D$, where $r$ is the radius of the biggest multi-dimensional ball $\Omega$ such that $\Omega\cap f(\mathbb{R}^2)\subset f(\cal B)$.

To study the physical consistency of the system at position $X$ for the destination $D$, we split the physical plane in four parts $P_1, P_2, P_3, P_4$ with
$P_1 = \{X' | \vec{XX'}\cdot\vec{XD}\leq 0 \}$,
$P_2 = \{X' | \vec{XX'}\cdot\vec{XD}> 0$ and $d(X,X') < d(X,D) \}$,
$P_3 = \{X' | \vec{DX'}\cdot\vec{DX}> 0$ and $d(X,X') \geq d(X,D) \}$,
$P_4 = \{X' | \vec{DX'}\cdot\vec{DX}\leq 0 \}$.
Since the apparent destination $D'$ is defined by
$$D' = X + \sum_i{\frac{d(X,A_i)-d(D,A_i)}{d(X,A_i)} \vec{XA_i}}$$
we see as illustrated in Figure~\ref{fig_consistency} that only the anchors in $P_2$ give a negative contribution to $\vec{XD'}\cdot\vec{XD}$.
\begin{figure}[tb]
 \begin{center}
  \includegraphics[width=8cm]{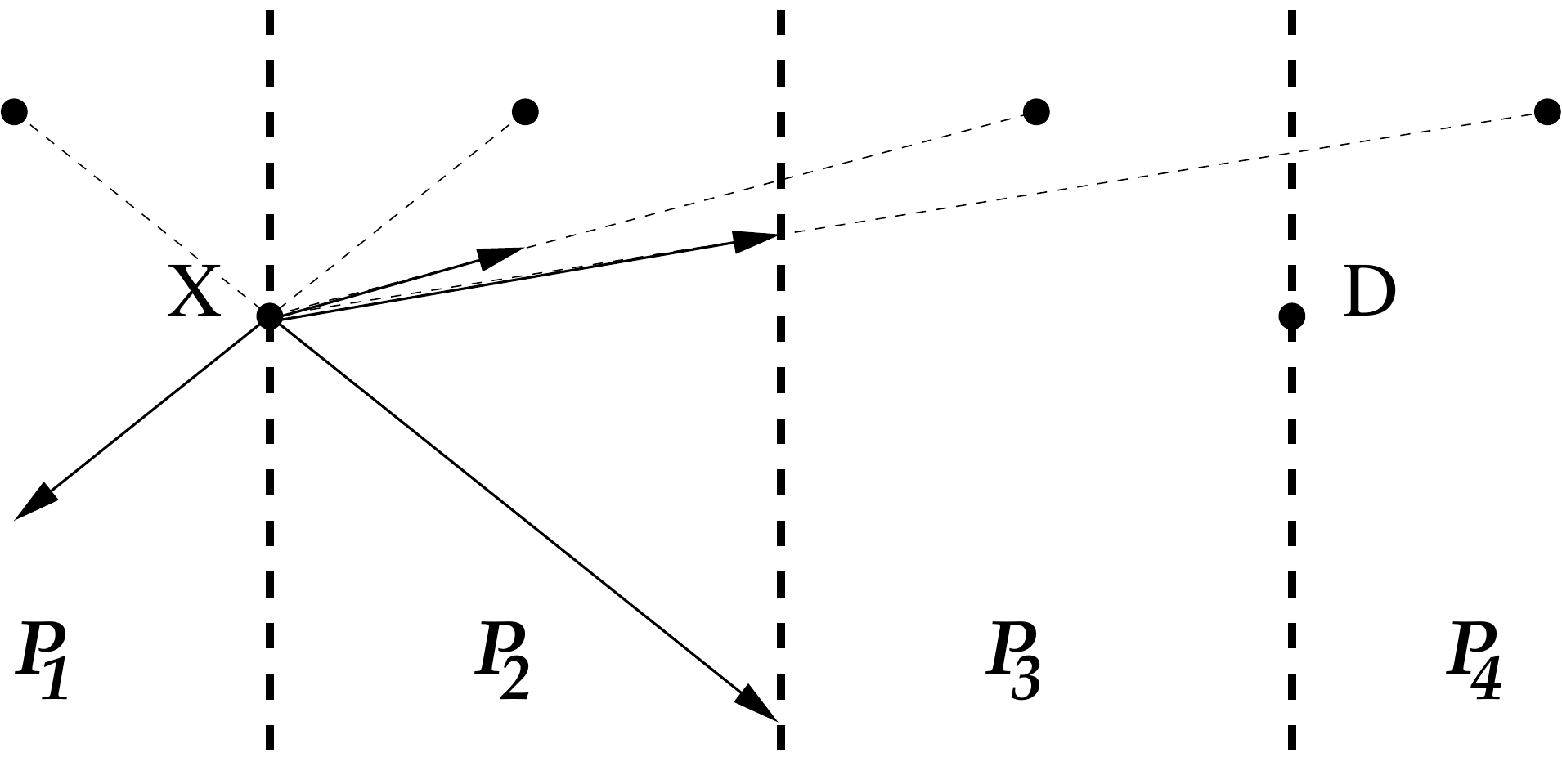}
  \caption{Contribution of anchors in $P_1, P_2, P_3, P_4$.}
  \label{fig_consistency}
 \end{center}
\end{figure}

If anchors are randomly distributed in the network, the negative contribution will most probably be small enough for the system to be consistent, unless $P_1$ and $P_4$ are almost void of nodes, which happens when $X$ and $D$ are located on opposite borders of the network (so that all the anchors are between them). This situation never occurred in the experiments carried out in~\cite{Vrac1} (but the end nodes were not chosen on the border of the network).
Nevertheless, physical inconsistency may be avoided by selecting anchors when the destination $D$ of a message originating from $X_0$ is far away:
\begin{enumerate}
\item by default, use all the anchors.
\item compute $l_A = \max_{i\in\{1,..,n\}} \max(d(D,A_i),d(X_0,A_i)) = \max(||f(D)||_\infty,||f(X_0)||_\infty)$. $l_A$ gives an idea of the diameter of the network.
\item for each node $X$ along the path of the message
  \begin{enumerate}
  \item compute $l_X = \max_{i\in\{1,..,n\}} |d(X,D) - d(X,A_i)| = ||f(D)-f(X)||_\infty$. $l_X$ is smaller than $d(X,D)$.
  \item if using all the anchors and if  $l_X > \frac{2l_A}{3}$ then use only the anchors $A_i$ such that $d(D,A_i) < \frac{l_A}{3}$.
  \item if using a subset of anchors and if $l_X < \frac{l_A}{2}$ then use all the anchors.
  \end{enumerate}
\end{enumerate}
In this way, physical inconsistency can be completely avoided in the network, at the cost of using a different coordinate system when $d(X,D)$ is comparable to the diameter of the network.

\section{Conclusion}
Geographic routing is an essential component in connecting sensor networks. Foregoing the previously necessary localization phase where physical Cartesian coordinates are produced is an important step into making networks more robust and totally independent from external hardware. Sensor network applications that use localization information exclusively inside the network may transparently use virtual coordinates, whereas sophisticated physical localization may still be performed at some external base station from the virtual coordinates whenever localization must be used externally. In this way, directly using raw distance information without any costly or sophisticated localization calculus is a simple, viable, and efficient way to perform geographic routing.

\bibliographystyle{abbrv}
\bibliography{virtualRoutingSac}
\end{document}